\theoremstyle{definition}
\newcommand{\stab}{\ensuremath{\mathbb{S}}}
\newtheorem*{que*}{Question}
\newtheorem{cor}{Corollary}
\newtheorem{remark}{Remark}
\newtheorem{problem}{Problem}
\newcommand{\beq}{\begin{equation}}
\newcommand{\eeq}{\end{equation}}
\newcommand{\barr}{\left[\begin{array}}
\newcommand{\earr}{\end{array}\right]}
\newcommand{\bi}{\begin{itemize}}
\newcommand{\ei}{\end{itemize}}
\newcommand{\bnum}{\begin{enumerate}}
\newcommand{\enum}{\end{enumerate}}
\newcommand{\bc}{\begin{center}}
\newcommand{\stabS}{\mathbb{S}}
\newtheorem{thm}{Theorem}
\newtheorem{lem}{Lemma}
\begin{document}

\title{Stabilization Theory For Active Multi-port Networks }

\author{\IEEEauthorblockN{Mayuresh~Bakshi\IEEEauthorrefmark{2}~\IEEEmembership{Member,~IEEE,}, \thanks{\IEEEauthorrefmark{2}bmayuresh@ee.iitb.ac.in, Assistant Professor, Dept.\ of Engineering and Applied Sciences, VIIT Pune, India}
Virendra R Sule\IEEEauthorrefmark{1} \thanks{\IEEEauthorrefmark{1}vrs@ee.iitb.ac.in, Professor, Dept.\ of Electrical Engineering, IIT Bombay, India} \\ and Maryam Shoejai Baghini\IEEEauthorrefmark{3},~\IEEEmembership{Senior Member,~IEEE},\thanks{\IEEEauthorrefmark{3}mshoejai@ee.iitb.ac.in, Professor, Dept.\ of Electrical Engineering, IIT Bombay, India} }}

\maketitle

\begin{abstract}
 This paper proposes a theory for designing stable interconnection of linear active multi-port networks at the ports. Such interconnections can lead to unstable networks even if the original networks are stable with respect to bounded port excitations. Hence such a theory is necessary for realising interconnections of active multiport networks. Stabilization theory of linear feedback systems using stable coprime factorizations of transfer functions has been well known. This theory witnessed glorious developments in recent past culminating into the $H_{\infty}$ approach to design of feedback systems. However these important developments have seldom been utilized for network interconnections due to the difficulty of realizing feedback signal flow graph for multi-port networks with inputs and outputs as port sources and responses. This paper resolves this problem by developing the stabilization theory directly in terms of port connection description without formulation in terms of signal flow graph of the implicit feedback connection. The stable port interconnection results into an affine parametrized network function in which the free parameter is itself a stable network function and describes all stabilizing port compensations of a given network.    
\end{abstract}

\begin{IEEEkeywords}
Active networks, Coprime factorization, Feedback stabilization, Multiport network connections.
\end{IEEEkeywords}


\section{Introduction}
\IEEEPARstart {A}{ctive} electrical networks, which require energizing sources for their operation, are most widely used components in engineering. However operating points of such networks are inherently very sensitive to noise, temperature and source variations. Often there are considerable variations in parameters of active circuits from their original design values during manufacturing and cannot be used in applications without external compensation. Due to such uncertainties and time dependent variations, which cannot be modeled accurately, compensation of active networks or interconnections can lead to an unstable circuit or even greater sensitivity even if the component parts are stable. On the other hand interconnection of passive networks remains passive and stable. For this reason stability is never a consideration in passive network synthesis or design. Passive network theory and design thus enjoys a rich analytical framework devoid of the engineering complication of stability \cite{dary, andv}. On the other hand stability in analysis of active networks is an important property \cite{hayk} hence synthesis of active networks with stability is an important problem of circuit design. 

The purpose of this paper is to develop a systematic approach to interconnection of active linear time invariant (LTI) multi-port networks with the resolution of this stability issue in mind. Specifically, we address the following question. 
\begin{que*} \emph{
Given a LTI network $N$ with a port model $G$ at its nominal parameter values, what are all possible (LTI, active) networks $N_{c}$ with compatible ports, when connected to $N$ at ports with specified (series or parallel) topology, form a stable network?} 
\end{que*}
This question is analogous to that of feedback system theory, "what are all possible stabilizing feedback controllers of a LTI plant?" Such a question led to landmark new developments in feedback control theory in recent decades \cite{vids, doft, zhdg} broadly known under the $H_{\infty}$ approach to control. To motivate precise mathematical formulation of the problem we need to consider stability property of active networks and nature of interconnections at ports.

\subsection{Stabilization problem for network interconnection}
In active network theory two kinds of stability property have been well known \cite{chdk}, the \emph{short circuit stability} and the \emph{open circuit stability}. These can be readily extended to multi-port networks. Primarily these stability properties refer to the stability of the source to response LTI system associated at the ports where either an independent voltage or a current sources are attached and the network responses are the corresponding current or voltage reflected at the source ports respectively. Hence the problem of stabilization can be naturally defined as achieving stable responses at ports by making port interconnections with another active network. Such a stabilization problem has somehow never seems to have been addressed in the literature as far as known to authors. (Although such a problem appears to have made its beginning in \cite[chapter 11]{chdk}).

Modern algebraic feedback theory, whose comprehensive foundations can be referred from \cite{vids}, came up with the solution of the stabilization problem in the setting of feedback systems. However the formalism of feedback signal flow graph is not very convenient for describing port interconnections of networks. Although Bode \cite{bode} considered feedback signal flow graph to describe amplifier design the methodology did not easily carry further for multi stage and multi loop amplifier design. For single ports defining the loop in terms of port function is relatively simple as shown in \cite[chapter 11]{chdk}. Hence if stability of multi port networks is to provide a basis for stabilization problem of such networks, the stabilization problem must be formulated directly from port interconnection rules rather than signal flow graph rules. This forms the central motivation of the problem proposed and solved in this paper.

In recent times behavioral system theory \cite{powi}
considered problems of synthesis of control systems in which feedback control is subsumed in general interconnection between systems by defining linear relations between variables interconnected. Hence such interconnections are still equivalent to signal flow graph connections. The port interconnection in networks is however of special kind than just mathematical interconnection since connections between ports are defined between physical quantities such as voltages and currents and have to follow either series or parallel connection (Kirchhoff's) laws. Hence network connections at ports are instances of physically defined control rather than signal flow defined control which separates physical system from the logic of control. Control arising out of physical relationships between systems is also relevant in several other areas such as Quantum control systems, Biological control systems and Economic policy studies. Hence this theory of circuit interconnections should in principle be relevant to such other types of interconnected systems as well.   

\subsection{Background on systems theory, networks and coprime representation}
We shall follow notations and mathematical background of LTI circuits from \cite{chdk}. A driving point function of a LTI network is the ratio of Laplace transform of source and response physical signals which will always be currents or voltages at ports and which will be a rational function of a complex variable $s$ with real coefficients. In single port case a network function is always an impedance or admittance function. In the case of multi port network the driving point function is a matrix whose $(i,j)$-th entry is the driving point function with source at $j$-th port and response at $i$-th port. Hence the entries can also represent current and voltage ratios. 

We first need to make the precise well known assumption (as justified in \cite[chpater 10]{chdk}) that whenever a network is specified by its driving point function it has no unstable hidden modes. In the case of multi port networks this assumption can be made precise by assuming special fractional representation which depends on the notion of stable functions as follows. All the details of this approach are referred from \cite{vids}. The BIBO stable LTI systems have transfer functions without any poles in the RHP, the closed right half complex plane, (called stable proper transfer functions). This set of transfer functions form an algebra which is denoted as $\stabS$. General transfer functions $T$ of LTI systems are represented as $T=nd^{-1}$ where $n$, $d$ are themselves stable transfer functions and moreover $n$, $d$ are coprime. We refer \cite{vids,doft} for the theory of stable coprime fractional representation. Analogously we shall consider network functions always represented in terms of fractions of stable coprime functions. This is then an equivalent representation of network functions without hidden modes. For multiport network functions we shall consider the \emph{doubly coprime} fractional representation \cite{vids} as the hidden mode free representation. One more assumption we make for convenience is that whenever we consider network functions they will always be \emph{proper} functions without poles at infinity. Although this rules impedences and admittances of pure capacitors and inductors, in practice we can always consider these devices with leakage resistance and conductance hence their models are practically proper. Hence with this regularization to properness, any fractional representation $nd^{-1}$ will have the $d$ function in $\stabS$ without zeros at infinity. 

Finally, if $H=N_{r}D_{r}^{-1}=D_{l}^{-1}N_{l}$ are doubly coprime fractions of a network function $H$ then all functions $\tilde{H}$ in a neighbourhood of $H$ will be defined by doubly coprime fractions $\tilde{H}=\tilde{N}_{r}\tilde{D}_{r}^{-1}=\tilde{D}_{l}^{-1}\tilde{N}_{l}$ where $\tilde{N}_{r},\tilde{D}_{r},\tilde{D}_{l},\tilde{N}_{l}$ are in the neighbourhoods of $N_{r},D_{r},D_{l},N_{l}$ respectively. This way we shall draw upon the rich machinery of stable coprime fractional theory of \cite{vids} for formulating a stabilization theory for multi port interconnection.

The organization of this paper is as follows. Section 2 first reviews stability in single port networks and then the stabilization problem in case of single port networks is defined and solved. In section 3, the stability and stabilization problem in case of multiport networks is defined. The stabilization problem in case of interconnected two port network is then theoretically solved and the expression for the set of compensating networks stabilizing the interconnected network is derived. In section 4, a practical circuit example of two stage operational amplifier in unity feedback configuration is considered. This example is solved using coprime factorization approach to obtain set of  compensating networks in terms of a free parameter that stabilizes the given operational amplifier followed by conclusion in section 5.
\section {Stability and stabilization in single port networks}
We begin with the single port case. Two types of stability properties for a single port were defined in \cite{chdk} for LTI active networks. Consider a single port LTI network for which the port can be excited with an independent source of a known type. The stability property by definition depends on the type of this source.
\begin{enumerate}
\item \emph{\textbf{Short circuit stability.}} An independent voltage source $v_{s}$ is connected to an active network with driving point impedance $Z(s)$ at the port as shown in the Fig. \ref{fig001:shortcktstab}.
\begin{figure}[ht]
\centering
\includegraphics[width=6cm,height=2.8cm]{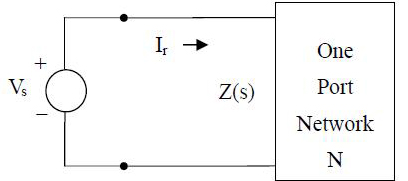}
\caption{One port network excited by a voltage source}\label{fig001:shortcktstab}
\end{figure} 
The network has all internal sources zero (or has zero stored energy). Let $V_{s}$ denotes the Laplace transform of this source voltage. The current $i_{r}$ at the port in transformed quantity denoted $I_{r} = Z(s)^{-1}V_{s}$ where $I_{r}$ is the Laplace transform of the response current $i_{r}$. The network is then said to be \emph{short circuit stable} if a bounded $v_{s}$ has bounded response $i_{r}$. Mathematically, this is equivalent to the condition, the network is short circuit stable iff $Z(s)^{-1}$ has no poles in RHP.
\item \emph{\textbf{Open circuit stability.}} An independent current source $i_{s}$ is connected to an active network with driving point admittance $Y(s)$ at the port as shown in the Fig.\ref{fig001:opencktstab}.
\begin{figure}[ht]
\centering
\includegraphics[width=6cm,height=2.8cm]{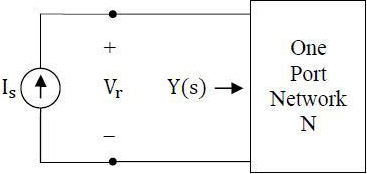}
\caption{One port network excited by a current source}\label{fig001:opencktstab}
\end{figure} 
The network has all internal sources zero (or has zero stored energy). Let $I_{s}$ denotes the Laplace transform of this source voltage. The voltage $v_{r}$ at the port in transformed quantity denoted $V_{r} = Y(s)^{-1}I_{s}$ where $V_{r}$ is the Laplace transform of the response voltage $v_{r}$ . The network is then said to be \emph{open circuit stable} if a bounded $i_{s}$ has bounded response $v_{r}$. Mathematically, this is equivalent to the condition, the network is short circuit stable iff $Y(s)^{-1}$ has no poles in RHP.
\end{enumerate}

\subsection {Stabilization problem in single port network}
Next we state stabilization problems in single port case. For a single port network, say $N$, connections at the port are series (or parallel) connections of impedance (or admittance). However, if the source at the port is a voltage (respectively current) then a parallel (respectively series) connection of a compensating network has no effect on the current (respectively voltage) in $N$ (respectively voltage across $N$). In such case, the connected network has no compensating or controlling effect on current (respectively voltage) in (or across) $N$. Hence the connection of a compensating network must be appropriate. This constraint leads to two different notions of stabilization. 

\subsubsection{Short circuit stabilization}
For stabilization of voltage fed impedance, say $Z$, it is required to change voltage across $Z$ which is possible only by a series connection of compensation impedance $Z_{c}$. Due to this compensation, the controlled current in $Z$ is $I_{r} = V_{s} / (Z +Z_{c})$. Following definition of short circuit stability, the stabilization problem envisages impedances $Z_{c}$ such that $(Z+Z_{c})^{-1}$ are stable. However a stronger requirement is chosen to define the short circuit stabilization as follows.

\begin{problem}[Short Circuit Stabilization]
\emph{Given a one port network with impedance function $Z$ fed by a voltage source, find all impedance functions $Z_{c}$ such that the impedance of the series connection $Z_{T} = (Z + Z_{c})$ satisfies
\begin{enumerate}
\renewcommand{\theenumi}{\roman{enumi}}
\item $Z_{T}^{-1}$ has no poles in RHP.
\item $\tilde{Z}_{T}^{-1}$ is a stable function where $\tilde{Z}_{T} = \tilde{Z} + Z_{c}$ for all $\tilde{Z}$ in a sufficiently small neighborhood of $Z$. 
\end{enumerate}
\ \ \ If above conditions are satisfied by $Z_{c}$ then it is called \emph{short circuit stabilizing compensator} of $Z$.
}
\end{problem}

First condition ensures stability of the interconnected network. The second condition is important in practice and requires that the compensator ensures stability of the interconnection over a sufficiently small neighbourhood of $Z$.

\subsubsection{Open circuit stabilization}
For compensation of current fed admittance, say $Y$, it is required to change current through $Y$ which is possible only by a parallel connection of compensating admittance $Y_{c}$. Due to this compensation, the controlled voltage across the port is $V_{r} = I_{s} / (Y +Y_{c})$. Following definition of open circuit stability, open circuit stabilization problem envisages finding all admittances $Y_{c}$ such that $(Y+Y_{c})^{-1}$ is stable. However a stronger requirement is chosen to define stabilization as follows. 

\begin{problem}[Open circuit stabilization]
\emph{Given a one port network with admittance function $Y$ fed by a current source, find all admittance functions $Y_{c}$ such that the admittance of the parallel connection $Y_{T} = (Y + Y_{c})$ satisfies
\begin{enumerate}
\renewcommand{\theenumi}{\roman{enumi}}
\item $ Y_{T}^{-1}$ has no poles in RHP.
\item $\tilde{Y}_{T}^{-1}$ is a stable function where $\tilde{Y}_{T} = \tilde{Y} + Y_{c}$ for all $\tilde{Y}$ in a sufficiently small neighborhood of $Y$. 
\end{enumerate}
\ \ \ If above conditions are satisfied by $Y_{c}$ then it is called \emph{open circuit stabilizing compensator} of $Y$.
}
\end{problem}

\subsection {Structure of the stabilizing compensator for single port open circuit stabilization}
We first describe the \emph{coprime fractional representation}. Consider the algebra $\stabS$ of stable proper network functions. A general admittance function of a single port network, say $Y$, is considered in the form $Y=nd^{-1}$ where $n,d$ belong to $\stabS$, $d$ has no zeros at infinity with the additional property that they are coprime i.e. have greatest common divisors which are invertible in $\stabS$. This is equivalent to the fact that there exist $x,y$ in $\stabS$ such that the following identity holds.
\begin{equation}
nx+dy=1
\end{equation}
Consider analogously a coprime fractional representation $Y_{c}=n_{c}d_{c}^{-1}$ for compensator network function $Y_{c}$ and $x_{c},y_{c}$ in $\stab S$ with the following identity.
\begin{equation}
n_{c}x_{c}+d_{c}y_{c}=1
\end{equation}
With this we have the following relationship between $Y$ and $Y_{c}$.

\begin{lem}\emph{
If a one port network with an admittance function, say $Y$, is fed by a current source is connected in parallel across an admittance $Y_{c}$ then the combined network is open circuit stable if and only if $nd_{c}+dn_{c}$ is a unit of $\stabS$.}
\end{lem}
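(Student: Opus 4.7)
The plan is to express the port admittance of the parallel interconnection using the given coprime fractions and then apply the product of the two Bezout identities. Direct computation gives $Y+Y_c = nd^{-1}+n_c d_c^{-1} = (nd_c+dn_c)(dd_c)^{-1}$, so $Y_T^{-1} = dd_c(nd_c+dn_c)^{-1}$; write $u = nd_c+dn_c \in \stabS$. The sufficiency direction is then immediate: if $u$ is a unit of $\stabS$ then $u^{-1}\in\stabS$ and $Y_T^{-1} = dd_c\, u^{-1}$ is a product of elements of $\stabS$, so the combined network is open-circuit stable.

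For necessity, the strategy is to multiply the two Bezout identities $nx+dy=1$ and $n_c x_c + d_c y_c = 1$ and collect the four cross-terms,
\[
nn_c(xx_c) + nd_c(xy_c) + dn_c(yx_c) + dd_c(yy_c) = 1,
\]
and then divide by $u$ to obtain
\[
\frac{1}{u} = \frac{nn_c}{u}\,xx_c + \frac{nd_c}{u}\,xy_c + \frac{dn_c}{u}\,yx_c + \frac{dd_c}{u}\,yy_c.
\]
The plan is to conclude $u^{-1}\in\stabS$ by recognizing that every factor on the right lies in $\stabS$: the Bezout data $x,y,x_c,y_c$ are stable by construction; $dd_c/u = Y_T^{-1}\in\stabS$ is the stability hypothesis; and the remaining ratios $nd_c/u$, $dn_c/u$, and $nn_c/u$ are the transfer functions from the current source to the internal branch currents and to the cross-coupling signal, all of which must lie in $\stabS$ once the combined network is read as being free of unstable hidden modes. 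With each factor on the right in $\stabS$, the sum on the right is in $\stabS$, so $1/u\in\stabS$ and $u$ is a unit.

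The main obstacle is precisely this last point: upgrading the literal hypothesis $Y_T^{-1}\in\stabS$ to stability of the full quartet of ratios. An RHP zero of $u$ could in principle be absorbed by a matching RHP zero of $dd_c$, leaving $Y_T^{-1}$ formally stable while concealing an unstable mode of the interconnection. The lemma is therefore really a statement about internal stability of the implicit feedback system realizing the parallel port connection, and its correctness rests on the paper's standing convention that network functions are always represented by coprime fractions without hidden modes; under that convention, the product-of-Bezouts calculation above delivers $u^{-1}\in\stabS$ as required.
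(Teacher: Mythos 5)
Your sufficiency direction is fine and matches the paper's. The necessity direction, however, contains the genuine gap, and you have correctly located it yourself but then waved it away rather than closed it. From the literal hypothesis you only get $dd_{c}u^{-1}=Y_{T}^{-1}\in\stabS$; the other three ratios $nn_{c}/u$, $nd_{c}/u$, $dn_{c}/u$ in your Bezout-product identity are not consequences of anything you have stated. Appealing to a ``no unstable hidden modes'' convention does not help: that convention is about the coprimeness of each network's own fractions $(n,d)$ and $(n_{c},d_{c})$, and it does not forbid $d$ and $d_{c}$ from sharing an RHP zero. Indeed, if $Y$ and $Y_{c}$ share a simple RHP pole $s_{0}$ (so $d(s_{0})=d_{c}(s_{0})=0$), then $u(s_{0})=0$ while $Y_{T}^{-1}=dd_{c}/u$ can still be free of RHP poles, so nominal port stability alone simply does not imply that $u$ is a unit. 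In other words, the step ``upgrade $dd_{c}/u\in\stabS$ to all four ratios in $\stabS$'' is exactly the content that needs proof, and as written your argument assumes it.

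The paper closes this gap by a different mechanism than the internal-stability reading you gesture at: it deliberately avoids the two-input/two-output feedback formulation and instead builds robustness into the very definition of open circuit stabilization, requiring $\tilde{Y}_{T}^{-1}\in\stabS$ for all $\tilde{Y}=\tilde{n}\tilde{d}^{-1}$ in a neighbourhood of $Y$. Necessity then follows from a cancellation argument: any RHP zero of $\tilde{\Delta}=\tilde{n}d_{c}+\tilde{d}n_{c}$ would have to be cancelled by an RHP zero of $\tilde{d}d_{c}$; coprimeness of $(\tilde{n},\tilde{d})$ and of $(n_{c},d_{c})$ forces such a zero to be common to $d_{c}$ and $\tilde{d}$, and a fixed $d_{c}$ cannot share an RHP zero with every $\tilde{d}$ as $\tilde{d}$ varies over a neighbourhood of $d$; hence $\tilde{\Delta}$, and in particular $\Delta=nd_{c}+dn_{c}$, must be a unit. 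If you want to salvage your Bezout-product route (which is the classical Vidyasagar argument), you must either strengthen the hypothesis to stability of all the source-to-branch maps of the interconnection and prove that this is what ``open circuit stable'' means here, or adopt the paper's neighbourhood definition and redo the necessity step accordingly; your sufficiency part should then also note that a unit $\Delta$ stays a unit under small perturbations of $n,d$, which is what the robust half of the definition requires.
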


\begin{proof}[ht]
The admittance of the parallel connection is as shown in the Fig. \ref{fig01:parconn}.

The parallel connection gives the combined admittance as $Y_{T}=Y+Y_{c}$ which has fractional representation as given below.
\begin{equation}
Y_{T}=\frac{nd_{c}+dn_{c}}{dd_{c}}
\end{equation}

For open circuit stability, $Y_{T}^{-1}$ as well as $\tilde{Y}_{T}^{-1}$ must be in $\stabS$ for all $\tilde{Y}$ in a sufficiently small neighbourhood of $Y$. Denote
\[
\tilde{\Delta}=\tilde{n}d_{c}+\tilde{d}n_{c}
\]
then 
\begin{equation}
\label{singleportparallel}
\tilde{Y}_{T}^{-1}=\frac{\tilde{d}d_{c}}{\tilde{\Delta}}
\end{equation}

It follows from equation (\ref{singleportparallel}) that if $\tilde{Y}_{T}^{-1}$ is stable then all roots of $\tilde{\Delta}$ in RHP are cancelled by RHP roots of $\tilde{d}d_{c}$. However over a neighbourhood of $n,d$ the pairs $\tilde{n},\tilde{d}$ are also coprime and hence do not have a common root in RHP. Since $d_{c}$ and $n_{c}$ are also copime, the only roots of $\tilde{\Delta}$ in RHP common with $d_{c}$ possible are those common between $d_{c}$ and $\tilde{d}$. But as $\tilde{d}$ varies over a neighbourhood of $d$ there can be no common roots with a fixed $d_{c}$ over the whole neighbourhood. Hence if $\tilde{Y}_{T}^{-1}$ is stable over a neighbourhood of $n,d$ then there is no possibility of RHP root cancellation between $\tilde{\Delta}$ and $\tilde{d}d_{c}$. Hence if $\tilde{Y}_{T}^{-1}$ is stable then $\tilde{\Delta}$ must not have a root in RHP or it must be a unit of $\stabS$. This proves necessity.

Conversely, if $\Delta=nd_{c}+dn_{c}$ is a unit then in a sufficiently small neighbourhood of $n,d$ all $\tilde{\Delta}$ are units in $\stabS$. Hence $\tilde{Y}_{T}^{-1}$ are stable functions in a neighbourhood. Hence sufficiency is proved.
\end{proof}
\begin{figure}
\centering
\includegraphics[width=4.8cm,height=2.8cm]{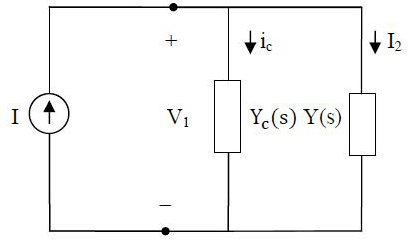}
\caption{One port network with compensating admittance connected in parallel}\label{fig01:parconn}
\end{figure}

It is worth noting here that for proving internal stability of feedback systems, the crucial formulation \cite{vids} of the stability of the map between two external inputs to two internal outputs is replaced by requiring stability over a neighbourhood of the given function $Y$. This also practically makes sense as given functions are never accurately same as real function, just as in control system a plant model is never an exact representation of the real plant. Although the two input-output formulation can be reconstructed for defining stability of the interconnection at the ports, we prefer the above approach of avoiding the feedback loop signal flow graph. 

In terms of given coprime representations of $Y$ as in above lemma we have a more special coprime fractional representations for $Y_{c}$,

\begin{cor}\label{cor1}\emph{
If a one port network with admittance function in fractional representation $Y=nd^{-1}$, is fed by a current source and an admittance $Y_{c}$ is connected in parallel across the port, then $Y_{c}$ stabilizes $Y$ 
iff there is a fractional representation $Y_{c}=n_{c}d_{c}^{-1}$ in $\stabS$ which satisfies $nd_{c}+dn_{c}=1$.
}
\end{cor}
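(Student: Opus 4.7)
The plan is to show that the corollary is essentially a normalization of the unit $\Delta = nd_c + dn_c$ that appears in the preceding lemma. Since that lemma already characterizes stabilization by the condition that $\Delta$ is a unit in $\stabS$, the only real content here is to observe that, by rescaling a coprime fractional representation of $Y_c$ by $\Delta^{-1}$, we can force $\Delta$ to equal $1$ without changing $Y_c$ itself.

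For the necessity direction, I would start with an arbitrary coprime fractional representation $Y_c = n_c' d_c'^{-1}$ with $n_c', d_c' \in \stabS$. Since $Y_c$ stabilizes $Y$, the preceding lemma tells us that $\Delta = nd_c' + dn_c'$ is a unit of $\stabS$, so $\Delta^{-1}$ lies in $\stabS$. I would then define
\begin{equation}
n_c = n_c' \Delta^{-1}, \qquad d_c = d_c' \Delta^{-1}.
\end{equation}
These are elements of $\stabS$, their ratio is still $Y_c$ (the unit cancels), and by linearity
\begin{equation}
nd_c + dn_c = (nd_c' + dn_c')\Delta^{-1} = \Delta \Delta^{-1} = 1.
\end{equation}
Coprimeness of the new pair is automatic from this very identity, since any common factor of $n_c$ and $d_c$ in $\stabS$ would have to divide $1$ and hence be a unit.

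For sufficiency, I would suppose we are given a representation $Y_c = n_c d_c^{-1}$ in $\stabS$ with $nd_c + dn_c = 1$. Then the expression $\Delta = nd_c + dn_c = 1$ is trivially a unit of $\stabS$, and coprimeness of $n_c, d_c$ follows from the Bezout-type identity itself; the preceding lemma then immediately yields that $Y_c$ is an open circuit stabilizing compensator of $Y$.

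I do not expect any serious obstacle here: the only subtle point is verifying that the rescaling by $\Delta^{-1}$ keeps the representation inside $\stabS$ and preserves coprimeness, both of which follow from $\Delta^{-1} \in \stabS$ and the Bezout identity. The corollary is really a canonical-form statement that converts the ``unit'' condition of the lemma into the normalized identity $nd_c + dn_c = 1$, matching the Bezout identity used in the stable coprime factorization framework of \cite{vids}.
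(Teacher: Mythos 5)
Your proposal is correct and follows essentially the same route as the paper: both directions invoke the preceding lemma, and the necessity direction uses exactly the paper's rescaling $n_{c}=n_{c}'\Delta^{-1}$, $d_{c}=d_{c}'\Delta^{-1}$ to normalize the unit $\Delta$ to $1$. Your added remark that coprimeness of the rescaled pair follows from the Bezout identity itself is a harmless (and correct) elaboration of what the paper states without proof.
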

\begin{proof}
If $nd_{c}+dn_{c}=1$ holds for some fractional representation $Y_{c}=n_{c}d_{c}^{-1}$ then we have $\Delta = 1$ a unit of $\stabS$. Hence $Y_{c}$ stabilizes $Y$ from above lemma.

Conversely, let $Y_{c}$ stabilizes $Y$ with coprime fractional representation $Y_{c}=n_{1c}d_{1c}^{-1}$, then $\Delta=nd_{1c}+dn_{1c}$ is a unit of $\stabS$ hence $Y_{c}=n_{c}d_{c}^{-1}$ where $n_{c}=n_{1c}\Delta^{-1}$, $d_{c}=d_{1c}\Delta^{-1}$ are also coprime and satisfy $nd_{c}+dn_{c}=1$. 
\end{proof}

The next theorem gives the set of all admittances $Y_{c}$ which form stable interconnection with a given admittance $Y$ when connected in parallel and fed by a current source.

\begin{thm}
If $Y=nd^{-1}$ is a coprime fractional representation of an admittance $Y$ with the identity $nx+dy=1 $
Then the set of all admittance functions $Y_{c}$ which form open circuit stabilizing parallel compensation with $Y$ are given by the fractional representation $Y_{c}=(y+qn)(x-qd)^{-1}$
where $q$ is an arbitrary element of $\stabS$ such that $x-qd$ has no zero at infinity.
\end{thm}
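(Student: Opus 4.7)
The plan is to apply Corollary~\ref{cor1}, which has already reduced the stabilization condition to the existence of a coprime fractional representation $Y_{c}=n_{c}d_{c}^{-1}$ satisfying the Bezout-type identity $nd_{c}+dn_{c}=1$. So the theorem amounts to parametrizing all pairs $(n_{c},d_{c})$ in $\stabS$ that satisfy this identity, given one particular solution $(y,x)$ of $nx+dy=1$.

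First I would verify sufficiency, which is a direct substitution. Setting $n_{c}=y+qn$ and $d_{c}=x-qd$ for $q\in\stabS$, one computes
\[
nd_{c}+dn_{c} \;=\; n(x-qd)+d(y+qn) \;=\; nx+dy \;=\; 1,
\]
so by Corollary~\ref{cor1} the resulting $Y_{c}=n_{c}d_{c}^{-1}$ open-circuit stabilizes $Y$. The side condition that $x-qd$ has no zero at infinity is exactly what is needed so that $d_{c}$ is invertible at infinity, i.e.\ so that $Y_{c}$ is a proper network function in the convention fixed earlier in the paper.

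Next I would handle necessity. Suppose $Y_{c}$ is an open circuit stabilizing compensator. By Corollary~\ref{cor1} there exists a representation $Y_{c}=n_{c}d_{c}^{-1}$ with $nd_{c}+dn_{c}=1$. Subtracting this from $nx+dy=1$ yields
\[
n(x-d_{c}) \;=\; d(n_{c}-y).
\]
Since $n,d$ are coprime in $\stabS$, $d$ must divide $x-d_{c}$ in $\stabS$; this is the key algebraic step and is the main place where one really uses the ring structure of the stable proper rational functions (it follows from the Bezout identity: multiply $n(x-d_{c})=d(n_{c}-y)$ by $y$ and use $nx+dy=1$ to write $x-d_{c}$ explicitly as a multiple of $d$). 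Writing $x-d_{c}=qd$ for a unique $q\in\stabS$ gives $d_{c}=x-qd$, and substituting back gives $nqd=d(n_{c}-y)$, hence (cancelling $d$, which is nonzero) $n_{c}=y+qn$. Properness of $Y_{c}$ forces $d_{c}=x-qd$ to be nonzero at infinity, which is the stated constraint on $q$.

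The only real obstacle is the divisibility step in the necessity argument; everything else is substitution. This step is clean in $\stabS$ precisely because $(n,d)$ being coprime means $\gcd(n,d)$ is a unit, so the Bezout identity allows the explicit solution $q = y(x-d_{c})/d + x(n_{c}-y)/n$ after the usual manipulation, confirming $q\in\stabS$. Together, sufficiency and necessity show that $q\mapsto (y+qn)(x-qd)^{-1}$ is a bijection between $\{q\in\stabS : x-qd\text{ has no zero at infinity}\}$ and the set of all open circuit stabilizing compensators of $Y$.
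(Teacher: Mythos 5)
Your proposal is correct and follows essentially the same route as the paper: sufficiency by direct substitution into $nd_{c}+dn_{c}$ together with Corollary~\ref{cor1}, and necessity by parametrizing all solutions of the Bezout identity $nd_{c}+dn_{c}=1$ from the particular solution $(d_{c},n_{c})=(x,y)$; the paper simply cites \cite{vids,doft} for this last parametrization, while you derive it inline.

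Two small algebra slips are worth fixing, though neither is a conceptual gap. First, multiplying $n(x-d_{c})=d(n_{c}-y)$ by $y$ and using $dy=1-nx$ actually yields $n\mid(n_{c}-y)$, not $d\mid(x-d_{c})$; to get the latter directly, multiply by $x$ and use $nx=1-dy$, or simply compute $x-d_{c}=(nd_{c}+dn_{c})x-(nx+dy)d_{c}=d(n_{c}x-yd_{c})$. Second, your closing expression $q=y(x-d_{c})/d+x(n_{c}-y)/n$ is not the right formula (and those termwise quotients need not lie in $\stabS$); the correct parameter is $q=n_{c}x-yd_{c}=y(x-d_{c})+x(n_{c}-y)$, which is manifestly in $\stabS$. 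With that correction, cancelling the nonzero factor $d$ to get $n_{c}=y+qn$ and reading off the properness constraint on $x-qd$ completes the necessity argument exactly as in the standard references.
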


\begin{proof}
Suppose $Y_{c}=(y+qn)(x-qd)^{-1}$ for some $q$ in $\stabS$. Then,
\begin{equation}
\Delta=n(x-qd)+d(y+qn)=1
\end{equation}

This representation of $Y_{c}$ is a coprime fractional representation and by corollary (\ref{cor1}), it follows that $Y_{c}$ stabilizes $Y$.

Conversely, suppose $Y_{c}$ stabilizes $Y$ then from the above corollary we have a coprime fractional representation $Y_{c}=n_{c}d_{c}^{-1}$, where $n_{c}, d_{c}\in\stabS$ satisfy the following relation.
\begin{equation}
nd_{c}+dn_{c}=1
\end{equation}

Hence all solutions of $n_{c}$, $d_{c}$ in $\stabS$ of this identity with $d_{c}$ without a zero at infinity characterize coprime fractions of $Y_{c}$. Such solutions are well known (see \cite{vids, doft} for proofs) and are given by the formulas as below.
\begin{equation}
n_{c}=(y+qn),\mbox{  } d_{c}=(x-qd)
\end{equation}

This proves the formula claimed for all $Y_{c}$ which form an open circuit stable combination.
\end{proof}

Note that the admittance $Y$ and that of a stabilizing compensator $Y_{c}$ can not share a common pole or zero in RHP (called a non-minimum phase (NMP) pole or zero). Even their closer proximity in RHP would mean that the interconnected circuit has poor stability margin. Many such interpretations can be gathered from this algebraic characterization of stable interconnection of one port admittances which have been implicit part of knowledge of circuit designers or are new additions to this field.

\subsection{Structure of the stabilizing compensator for single port short circuit stabilization}
Analogous to the single port open circuit stabilization case, the formula for stabilizing $Z_{c}$ in case of short circuit stabilization problem can be stated and derived in similar manner. We shall thus state only the final theorem on the structure of the stabilizing impedance $Z_{c}$ for this case. 

In the present situation of short circuit stabilization, we have an impedance $Z$ fed by a voltage source $V_{s}$ the current is then $I_{r}=V_{s}/Z$. The current can be controlled only when we add another impedance $Z_{c}$ in series which changes the current to $V_{s}/(Z+Z_{c})$. Hence $Z_{c}$ is a short circuit stabilizing impedance iff $(Z+Z_{c})^{-1}$ is in $\stabS$. The structure of such a compensator is then given by the following.

\begin{thm}\emph{
If $Z=nd^{-1}$ is a coprime fractional representation of an impedance $Z$ with the identity $nx+dy=1$
Then the set of all impedance functions $Z_{c}$ which form short circuit stabilizing series compensation with $Z$ are given by the fractional representation. 
$Z_{c}=(y+qn)(x-qd)^{-1}$
where $q$ is an arbitrary element of $\stabS$ such that $x-qd$ has no zero at infinity.
}
\end{thm}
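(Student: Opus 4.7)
The approach is to mirror the open circuit derivation step by step, since the algebra of the series combination $Z_T = Z + Z_c$ is formally identical to that of the parallel combination $Y_T = Y + Y_c$ used earlier. Concretely, I would first establish the short circuit analogue of the Lemma: with coprime fractions $Z = nd^{-1}$ and $Z_c = n_c d_c^{-1}$, the series combination has impedance
\[
Z_T = \frac{n d_c + d n_c}{d\, d_c},
\]
and a perturbation $\tilde{Z} = \tilde{n} \tilde{d}^{-1}$ of $Z$ gives $\tilde{Z}_T^{-1} = \tilde{d} d_c / \tilde{\Delta}$ where $\tilde{\Delta} = \tilde{n} d_c + \tilde{d} n_c$. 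The same cancellation argument as in the parallel admittance proof then applies verbatim: coprimeness of $(\tilde{n}, \tilde{d})$ and of $(n_c, d_c)$, together with the fact that $\tilde{d}$ varies throughout a neighbourhood of $d$, rules out any uniform RHP pole/zero cancellation between $\tilde{\Delta}$ and $\tilde{d} d_c$, so stability of $\tilde{Z}_T^{-1}$ on the neighbourhood forces $\Delta = n d_c + d n_c$ to be a unit of $\stabS$. Conversely, units remain units under sufficiently small perturbation, giving the reverse implication.

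Next I would derive the short circuit analogue of Corollary~\ref{cor1}: if $Z_c$ stabilizes $Z$ with any coprime representation $Z_c = n_{1c} d_{1c}^{-1}$, then $\Delta = n d_{1c} + d n_{1c}$ is a unit of $\stabS$, and replacing $(n_{1c}, d_{1c})$ by $(n_{1c} \Delta^{-1}, d_{1c} \Delta^{-1})$ yields a normalized coprime representation $Z_c = n_c d_c^{-1}$ satisfying the strict Bezout identity $n d_c + d n_c = 1$.

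With these two preliminaries in place, the theorem reduces to parametrizing all solutions in $\stabS$ of the equation $n d_c + d n_c = 1$. The hypothesis $nx + dy = 1$ furnishes the particular solution $(n_c, d_c) = (y, x)$, so by the standard parametrization of Bezout solutions over $\stabS$ recalled in \cite{vids,doft}, every solution has the form $(n_c, d_c) = (y + qn,\, x - qd)$ for some arbitrary $q \in \stabS$, and the properness constraint that $d_c$ have no zero at infinity is exactly the side condition stated. For sufficiency, one verifies directly by substitution that
\[
n(x - qd) + d(y + qn) = nx + dy = 1,
\]
so $\Delta = 1$ is trivially a unit and the preliminary corollary yields stabilization.

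The main obstacle is purely conceptual rather than computational: one must check that the neighbourhood perturbation argument carries over unchanged from the parallel admittance setting to the series impedance setting, since the underlying algebraic condition on the sum $Z + Z_c = n d_c^{-1} + n_c d_c^{-1}$ is identical. Once this observation is made, the remainder is a direct transcription of the admittance proof with the substitutions $Y \leftrightarrow Z$, $Y_c \leftrightarrow Z_c$ and ``parallel'' $\leftrightarrow$ ``series'', and no new technical ingredient is required.
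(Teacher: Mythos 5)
Your proposal is correct and follows the same route as the paper, which proves this theorem simply by invoking the open circuit (parallel admittance) argument with the substitutions $Y \to Z$, $Y_c \to Z_c$ and noting that stability of the series interconnection means $(Z+Z_c)^{-1} \in \stabS$. You have merely written out explicitly the lemma, corollary, and Bezout parametrization steps that the paper transfers by analogy.
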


\begin{proof}
Proof readily follows from the open circuit stabilization case above by replacing $Y$, $Y_{c}$ by $Z$, $Z_{c}$ respectively along with their fractional representations and noting that in the present case stability of the interconnected network is equivalent to the fact that $(Z+Z_{c})^{-1}$ is in $\stab S$. 
\end{proof}

\section{Multi port stabilization for Bounded Source Bounded Response (BSBR) stability}
In case of multi port circuits, it is required to consider both, the open and short circuit stability, simultaneously due to existence of  independent voltage and current sources at the ports simultaneously. 
We first define the stability in the multi-port case. Consider a linear time invariant circuit represented by the following equation.
\begin{equation}
\label{multiport}
Y_r  = TU_s
\end{equation}
where $U_s$ denotes the vector of Laplace transforms of the independent sources at the ports and $Y_r$ denotes the vector of Laplace transforms of the responses at the ports (respecting indices). $T$ represents matrix of hybrid network functions between elements of $Y_r$ and $U_s$ respectively. We assume that $T$ is a proper rational matrix (with each of its elements having degree of numerator polynomial less than or at the most equal to the degree of the denominator polynomial) and has a formal inverse as a proper rational matrix.

We call such a circuit is \emph{Bounded Source Bounded Response} (BSBR) stable if for zero initial conditions of the network's capacitors and inductors, uniformly bounded sources have uniformly bounded responses. This is the case iff the hybrid network function (matrix) $T$ is stable i.e. $T$ has every entry belonging to $\stabS$. Let $M(\stabS)$ represent set of matrices of respective sizes whose elements belong to $\stabS$.

Consider a compensation network of same number and type of independent sources as the given network of (\ref{multiport}) to be compensated at all its port indices. In other words, we want to connect two ports of same index between the two networks only when both ports are either voltage fed or current fed. We can then connect the ports of the two networks at the index in either series or in parallel as shown in the Fig. \ref{portconnection}, as parallel (respectively series) connection of ports has no effect on the current (respectively voltage) in the individual circuits for the same voltage (respectively current) source. In other words a compensating network will have no effect on the response of a given network if connected in parallel (respectively series) at a voltage (respectively current) source. Hence we consider the interconnection of ports in series (respectively parallel) when the common source at the port is a voltage (respectively current) source. Let a compensating network has the hybrid function matrix $T_{c}$. (As in case of $T$, we assume $T_{c}$ to be proper rational with proper rational formal inverse.) Then for the source vector $U_{cs}$ the response vector $Y_{cr}$ in the compensating network is given by the following equation.
\begin{equation}
\label{eq4}
Y_{cr}= T_c U_{cs}
\end{equation}

\begin{figure}[ht]
\centering
\subfigure[Parallel connection for compensation when the source is current at ith port]{\includegraphics[width=3.8cm,height=4.8cm]{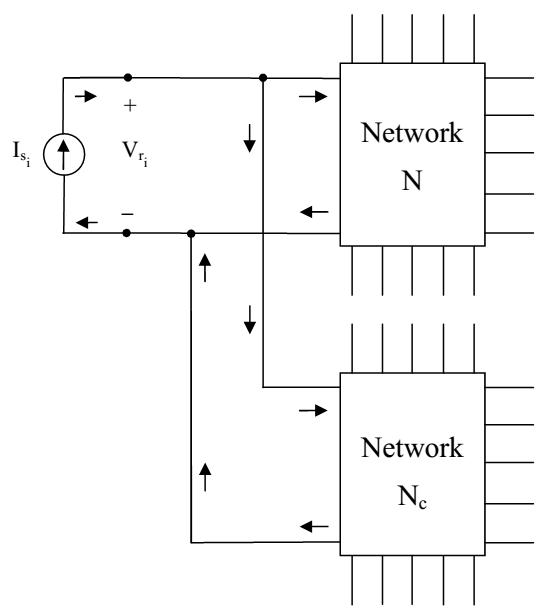} 
\label{fig:subfigure 1}}
\quad
\subfigure[Series connection for compensation when the source is voltage at ith port]{\includegraphics[width=3.8cm,height=4.8cm]
{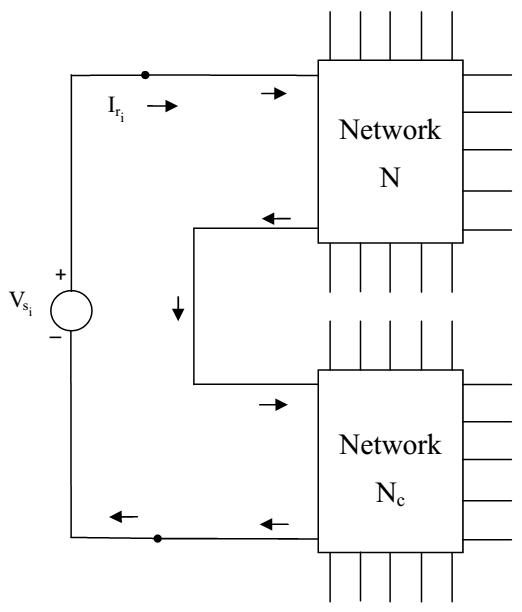}
\label{fig:subfigure 2}}
\caption{Diagram of two possible connections at a port}
\label{portconnection}
\end{figure}

Now if the two networks are connected as shown in the Fig. \ref{portconnection}, the independent source vector $\hat{U}_{s}$ applied to the interconnection distributes in the two networks as given by the following equation.
\begin{equation}
\hat{U}_{s}=U_{s}+U_{cs}
\end{equation}
while the common response vector $\hat{Y}_{r}$ of the two networks at the ports is given by the following equation.
\begin{equation}
\hat{Y}_{r}=TU_{s}=T_{c}U_{cs}
\end{equation}
Hence the source vectors reflected on ports of each network are given by the following equations.
\begin{equation}
U_{s}=T^{-1}\hat{Y}_{r}\ \ \ \ \ \mbox{   }U_{cs}=T_{c}^{-1}\hat{Y}_{r}
\end{equation}
Therefore, for the combined network we have the source response relationship given by the following equation.
\begin{equation}\label{TotalResponserelation}
\hat{Y}_{r}=(T^{-1}+T_{c}^{-1})^{-1}\hat{U}_{s}
\end{equation}
which is the hybrid representation of the interconnected network. It is thus clear that the interconnected network is BSBR stable iff the hybrid matrix of interconnection $(T^{-1}+T_{c}^{-1})^{-1}$ is in $M(\stabS)$. As in the single port case, we formally define the stabilization problem with additional restriction that the hybrid matrices of the interconnection arising from all $\tilde{T}$ in a neighborhood of $T$ are also stable.

\begin{problem}[Multi-port Hybrid Stabilization]
\emph{Given a multi-port hybrid matrix function $T$ of an LTI network, find all hybrid network function matrices $T_{c}$ of the compensating network connected as in figure \ref{portconnection} such that 
\begin{enumerate}
\item $\hat{T}=(T^{-1}+T_{c}^{-1})^{-1}$ is in $M(\stabS)$.
\item $(\tilde{\hat{T}}=\tilde{T}^{-1}+T_{c}^{-1})^{-1}$ is in $M(\stabS)$ for all $\tilde{T}$ in a neighbourhood of $T$.
\end{enumerate}
The matrix functions $T_{c}$ shall be called \emph{stabilizing hybrid compensators} of $T$.
}
\end{problem}

\subsection{Doubly coprime fractional representation}
For a comprehensive formulation of the multi-port stabilization we resort to the matrix case of coprime factorization theory over the $\stabS$ developed in \cite{vids}. This is the \emph{doubly coprime fractional representation} of proper rational functions over matrices $M(\stabS)$. For the proper rational network function $T$ the right coprime representation is $T=N_{r}D_{r}^{-1}$ where $N_{r}$, $D_{r}$ are matrices in $M(\stabS)$, $D_{r}$ is square, has no zeros at infinity and for which there exist $X_{l}$, $Y_{l}$ in $M(\stabS)$ satisfying the following identity.
\begin{equation}
X_{l}N_{r}+Y_{l}D_{r}=I
\end{equation}

Analogously, the left coprime representation is $T=D_{l}^{-1}N_{l}$ where $D_{l}$, $N_{l}$ are matrices in $M(\stabS)$, $D_{l}$ is square, has no zeros at infinity and for which there exist $X_{r}$, $Y_{r}$ in $M(\stabS)$ satisfying the following identity.
\begin{equation}
N_{l}X_{r}+D_{l}Y_{r}=I
\end{equation}
The doubly coprime representation of $T$ is then given as
\begin{enumerate}
\item $T$ is expressed by right and left fractions $T=N_{r}D_{r}^{-1}=D_{l}^{-1}N_{l}$ where $N_{r},D_{r},N_{l},D_{l}$ are matrices over $M(\stabS)$, $D_{r},D_{l}$ are square and have no zeros at infinity,
\item There exist matrices $X_{l},Y_{l}$ and $X_{r},Y_{r}$ in $M(\stabS)$ which satisfy the following equation.
\beq\label{dcf}
\left[\begin{array}{rr}
 X_{l} & Y_{l}\\
 D_{l} & -N_{l} 
\end{array}\right]
\left[\begin{array}{rr}
 N_{r} & Y_{r}\\
 D_{r} & -X_{r}
\end{array}\right]=
\left[\begin{array}{rr}
 I & 0\\
 0 & I
\end{array}\right]
\eeq
\end{enumerate}

We describe the doubly coprime fractional representation of a compensating network with hybrid network function $T_c$ by the respective matrices of fractions and identities by $N_{cr},D_{cr},N_{cl},D_{cl}$ and $X_{cr},Y_{cr},X_{cl},Y_{cl}$. It is also useful to recall that a square matrix $U$ in $M(\stabS)$ is called \emph{unimodular} if $U^{-1}$ also belongs to $M(\stabS)$. This is true iff $\det U$ is a unit or an invertible element of $\stabS$. 

Next, an open neighbourhood of $T$ is also specified in terms of the doubly coprime fractional representation of $T$. Any $\tilde{T}$ in a neighbourhood of $T$ is specified by a doubly coprime fractional representation 
with fractions $\tilde{T}=\tilde{N}_{r}\tilde{D}_{r}^{-1}=\tilde{D}_{l}^{-1}\tilde{N}_{l}$ and matrices $\tilde{X}_{l},\tilde{Y}_{l}$ and $\tilde{X}_{r},\tilde{Y}_{r}$ in $M(\stabS)$ satisfying the identities as given in equation (\ref{dcf}) in which the fractions $\tilde{N}_{r},\tilde{D}_{r}$, $\tilde{D}_{l},\tilde{N}_{l}$ are in respective neighbourhoods of the fractions of $T$.

In terms of the doubly coprime fractional (DCF) representation and the notion of neighbourhoods we have the preliminary.

\begin{thm}\emph{Consider the hybrid port interconnection as in the Fig. \ref{portconnection} of a given network $T$ with a compensating network $T_{c}$. Then the interconnection is BSBR stable (or $T_c$ stabilizes $T$) iff for a given doubly coprime fractions as above of $T$ there exist a doubly coprime fractions of $T_{c}$ that satisfy the following equation.
\begin{equation}\label{dcfrelationwithTc}
\left[\begin{array}{rr}
 D_{cl} & N_{cl}\\
 D_{l} & -N_{l} 
\end{array}\right]
\left[\begin{array}{rr}
 N_{r} & N_{cr}\\
 D_{r} & -D_{cr}
\end{array}\right]=
\left[\begin{array}{rr}
 I & 0\\
 0 & I
\end{array}\right]
\end{equation}
}
\end{thm}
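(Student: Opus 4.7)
The plan is to identify the claimed $2\times 2$ matrix identity as a pair of Bezout-like conditions $\Delta := N_{l} D_{cr} + D_{l} N_{cr} = I$ and its dual $D_{cl} N_{r} + N_{cl} D_{r} = I$ (the two off-diagonal blocks being automatic from the left/right fractional compatibility of $T$ and of $T_{c}$), and then to show that these are equivalent to robust BSBR stability of the port interconnection. The argument proceeds along the same lines as the single-port Lemma and Corollary already proved, now lifted to the matrix setting via the doubly coprime fractional machinery.

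First I would compute the hybrid interconnection in closed form. Using the algebraic identity $T^{-1}+T_{c}^{-1} = T^{-1}(T+T_{c})T_{c}^{-1}$, one obtains $\hat T = T_{c}(T+T_{c})^{-1}T$. Inserting the left fraction $T=D_{l}^{-1}N_{l}$ and the right fraction $T_{c}=N_{cr}D_{cr}^{-1}$ factors the sum as $T+T_{c}=D_{l}^{-1}\Delta\, D_{cr}^{-1}$, so $\hat T = N_{cr}\Delta^{-1}N_{l}$. For sufficiency, if the matrix identity holds then $\Delta=I$, hence $\hat T = N_{cr}N_{l}\in M(\stabS)$. Robustness follows because for $\tilde T$ in a neighbourhood of $T$, the perturbed $\tilde\Delta=\tilde N_{l}D_{cr}+\tilde D_{l}N_{cr}$ is close to $I$, so it is unimodular in $M(\stabS)$ by a standard Neumann-type argument, giving $\tilde{\hat T}=N_{cr}\tilde\Delta^{-1}\tilde N_{l}\in M(\stabS)$.

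For necessity, I would begin with an arbitrary DCF of $T_{c}$ using primed matrices and form $\Delta'=N_{l}D_{cr}'+D_{l}N_{cr}'$, so that the hybrid interconnection is again $\hat T=N_{cr}'(\Delta')^{-1}N_{l}$. The neighbourhood stability hypothesis forces $\tilde\Delta'$ to be unimodular for every nearby $\tilde T$, by the multivariable analogue of the argument used in the proof of the single-port Lemma: coprimeness of $(\tilde N_{l},\tilde D_{l})$ and of $(N_{cr}',D_{cr}')$ prevents simultaneous RHP rank-drop of $\tilde\Delta'$ with $\tilde N_{l}$ or with $N_{cr}'$, and varying $\tilde T$ over the whole neighbourhood eliminates any fixed RHP cancellation. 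Setting $N_{cr}:=N_{cr}'(\Delta')^{-1}$ and $D_{cr}:=D_{cr}'(\Delta')^{-1}$ yields a right coprime fraction of $T_{c}$ satisfying $N_{l}D_{cr}+D_{l}N_{cr}=I$. The standard DCF extension result from the theory in \cite{vids} then produces a compatible left fraction $T_{c}=D_{cl}^{-1}N_{cl}$ verifying the dual identity $D_{cl}N_{r}+N_{cl}D_{r}=I$, and together the four relations pack into the claimed block-matrix identity.

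The step I expect to be the main obstacle is promoting the single-port "no shared RHP root" reasoning to the multivariable setting, where the pertinent notion becomes "no simultaneous RHP rank-drop" of two matrix factors. I would handle it by invoking the pointwise Bezout identities implied by the coprime and doubly coprime factorizations of $T$ and $T_{c}$: these identities make shared rank-drop impossible at any point of the RHP, so any non-unimodular $\tilde\Delta'$ must have isolated RHP zeros of its determinant, and the persistence of stability over a neighbourhood of $T$ then excludes those zeros as well, forcing $\Delta'$ to be a unimodular element of $M(\stabS)$.
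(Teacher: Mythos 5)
Your proposal is correct and follows essentially the same route as the paper: express the interconnection as $\hat T = N_{cr}\Delta^{-1}N_{l}$ (dually $N_{r}\Delta_{r}^{-1}N_{cl}$), prove sufficiency from $\Delta=I$ with robustness via unimodularity of the perturbed $\tilde\Delta$, and prove necessity by showing neighbourhood stability forces $\tilde\Delta$ (hence $\Delta$) unimodular through the coprimeness/varying-zero argument, then absorbing $\Delta^{-1}$ into the compensator fractions. The only cosmetic difference is that for the dual block identity you invoke the standard DCF extension result from \cite{vids}, whereas the paper simply runs the symmetric argument on $\Delta_{r}$ and rescales the left fractions by $\Delta_{r}^{-1}$; both are adequate.
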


\begin{proof}
The expression for the compensated network $\hat{T}$ in left (respectively right) coprime fractions of $T$ (respectively $T_{c}$) can be obtained as shown below.
\begin{align}
\label{thatinrcf}
\hat{T}&=(T^{-1}+T_{c}^{-1})^{-1}=[(N_{r}D_{r}^{-1})^{-1}+(D_{cl}^{-1}N_{cl})^{-1}]^{-1} \notag \\ 
&=N_{r}(\Delta_{r})^{-1}N_{cl}
\end{align}
where
\begin{equation}
\Delta_{r}=N_{cl}D_{r}+D_{cl}N_{r}
\end{equation}

Similarly, we can obtain the following equation by using left coprime fractions for $T$ ($D_{l}^{-1}N_{l}$) and right coprime fractions for $T_{c}$ ($N_{cr}D_{cr}^{-1}$). 
\beq\label{thatinlcf}
\hat{T}=N_{cr}(\Delta_{l})^{-1}N_{l}
\eeq
where
\begin{equation}
\Delta_{l}=N_{l}D_{cr}+D_{l}N_{cr}
\end{equation}

Analogous expression holds for the compensated network function $\tilde{\hat{T}}$ in terms of $\tilde{N}_{r}$, $\tilde{D}_{r}$ for all $\tilde{T}$ in a neighbourhood of $T$  as given below.
\begin{equation}
\tilde{\hat{T}}=\tilde{N}_{r}(\tilde{\Delta}_{r})^{-1}N_{cl}
\end{equation}
where 
\begin{equation}
\tilde{\Delta}_{r}=N_{cl}\tilde{D}_{r}+D_{cl}\tilde{N}_{r}
\end{equation}

If $T_{c}$ stabilizes $T$ then $\tilde{\hat{T}}$ is in $M(\stabS)$ for all $\tilde{T}$ in a neighbourhood of $T$. If $\tilde{\Delta}_{r}$ has any RHP zeros when $\tilde{T}$ varies in a neighbourhood of $T$ then the poles of $\tilde{\hat{T}}$ in RHP can appear only from such zeros. Since such zeros also vary continuously with parameters of $\tilde{T}$ in an open neighbourhood and $N_{cl}$ has constant parameters, it follows that $\tilde{\hat{T}}$ is in $M(\stabS)$ iff the matrix
$
\tilde{N}_{r}(\tilde{\Delta}_{r})^{-1}
$
belongs to $M(\stabS)$ for all $\tilde{T}$ in a neighbourhood of $T$. Let $\tilde{z}$ be a zero in RHP of $\tilde{\Delta}_{r}$ then 
there is a vector $\tilde{v}$ over $\stabS$ (also varying with parameters) such that
\begin{equation}\label{rhpzeronr}
\tilde{N}_{r}(\tilde{z})\tilde{v}(\tilde{z})=(\tilde{\Delta}_{r})(\tilde{z})\tilde{v}(\tilde{z})=0
\end{equation}
which is equivalent to both $\tilde{N}_{r}$ and $\tilde{\Delta}_{r}$ having a common RHP zero at $\tilde{z}$ for all $\tilde{T}$ in a neighbourhood of $T$. But then this implies the following relation for all $\tilde{T}$.
\begin{equation}
\label{rhpzeronr1}
(\tilde{\Delta}_{r})(\tilde{z})\tilde{v}(\tilde{z})=N_{cl}(\tilde{z})\tilde{D}_{r}(\tilde{z})\tilde{v}(\tilde{z})=0
\end{equation}
Since $N_{cl}$ has constant parameters, its zeros are stationary for variations of $\tilde{T}$. Hence the equation (\ref{rhpzeronr1}) simplifies to the following equation.
\begin{equation}
\label{rhpzeronr2}
\tilde{D}_{r}(\tilde{z})\tilde{v}(\tilde{z})=0
\end{equation}
for all $\tilde{T}$ in a neighbourhood of $T$. However equation (\ref{rhpzeronr2}) along with equation (\ref{rhpzeronr}) mean that $\tilde{N}_{r},\tilde{D}_{r}$ are not right coprime in any neighbourhood of $T$. Since the coprime fractions remain coprime in an open neighbourhood of $T$, this is a contradiction. This proves that $\tilde{\Delta}_{r}$ has no zeros in RHP or that $\tilde{\Delta}_{r}$ is unimodular in a sufficiently small neighbourhood of $T$. Using identical arguments it follows that $\tilde{\Delta}_{l}$ is also unimodular. In partucluar it follows that $\Delta_{l}$ and $\Delta_{r}$ are unimodular. Now if a stabilizing $T_{c}$ is represented by right and left coprime fractions $N_{cr},D_{cr}$ and $D_{cl},N_{cl}$ then the new fractions, right fractions $N_{cr}\Delta_{l}^{-1},D_{cr}\Delta_{l}^{-1}$ and left fractions $\Delta_{r}^{-1}D_{cl},\Delta_{r}^{-1}N_{cl}$ satisfy the relations (\ref{dcfrelationwithTc}). This proves the necessity part of the claim.

Now, let the relations (\ref{dcfrelationwithTc}) be satisfied between the DCFs of $T$ and $T_c$. Then the interconnection network function is
\beq\label{thatexpressionindcf}
\hat{T}=N_{cl}N_{r}=N_{l}N_{cr}
\eeq
since $\Delta_{l}=\Delta_{r}=I$. Hence $\hat{T}$ is BSBR stable. On the other hand for $\tilde{T}$ a sufficiently small open neighbourhood of $T$ the perturbed fractions $\tilde{N}_{r},\tilde{D}_{r}$, $\tilde{D}_{l},\tilde{N}_{l}$ perturb $\tilde{\Delta}_{r}$ and $\tilde{\Delta}_{l}$ from identity but they still remain unimodular. Hence the interconnection function
\[
\hat{\tilde{T}}=\tilde{N}_{r}(\tilde{\Delta}_{r})^{-1}N_{cl}=\tilde{N}_{cl}(\tilde{\Delta}_{l})^{-1}N_{r}
\]
has no poles in RHP hence is BSBR stable for all perturbations in a sufficiently small neighbourhood. This shows that $T_{c}$ stabilizes $T$. This proves sufficiency.
\end{proof}

\begin{remark}\emph{Entire proof above can also be written starting from the left coprime fractions for $T$ and the expression (\ref{thatinlcf}) for the interconnection function. At the same time above theorem can also be expressed starting with DCF of $T^{-1}$ and establishing the structure of $T_{c}^{-1}$ which are just another hybrid port matrix functions of these networks.
}
\end{remark}

The structure of stabilizing compensators $T_c$ now follows from the equation (\ref{dcfrelationwithTc}) in terms of the DCF of $T$ as follows.

\begin{cor}\emph{
Given a DCF (\ref{dcf}) of $T$ the set of all stabilizing compensators $T_c$ are given by any of the following alternative formulae.
\beq\label{Tcparametrization}
\begin{array}{rcl}
T_{c} & = & (X_{l}-QD_{l})^{-1}(Y_{l}+QN_{l})\\
T_{c} & = & (Y_{r}+N_{r}Q)(X_{r}-D_{r}Q)^{-1}
\end{array}
\eeq
for all $Q$ in $M(\stabS)$ such that functions $\det (X_{l}-QD_{l})$ and $\det (X_{r}-D_{r}Q)$ have no zero at infinity.
}
\end{cor}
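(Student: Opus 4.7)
The plan is to reduce the classification to the block matrix identity (\ref{dcfrelationwithTc}) established in the preceding theorem, which states that $T_c$ stabilizes $T$ exactly when some DCF of $T_c$ fills in the top row of a matrix factorization whose bottom row is forced to equal the fixed pair $(D_l,\,-N_l)$. Since this bottom row is common to both (\ref{dcf}) and (\ref{dcfrelationwithTc}), the two admissible top rows must differ by a unimodular row operation over $\stabS$ that preserves the bottom row, and the parameter $Q$ simply records this elementary operation.

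For sufficiency I would substitute the candidate fractions $D_{cl} = X_l - QD_l$, $N_{cl} = Y_l + QN_l$, and the analogous $N_{cr} = Y_r + N_rQ$, $D_{cr} = X_r - D_rQ$, directly into (\ref{dcfrelationwithTc}) and verify the identity using (\ref{dcf}) together with the orthogonality relations it contains, namely $D_{l}N_{r}-N_{l}D_{r}=0$ and $D_{l}Y_{r}+N_{l}X_{r}=I$. For any $Q\in M(\stabS)$ all four fractions lie in $M(\stabS)$, and the stipulated properness of $\det(X_l - QD_l)$ ensures $D_{cl}$ is a legitimate denominator, so the preceding theorem immediately yields stability of the interconnection.

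For necessity, assume $T_c$ stabilizes $T$ and fix a DCF of $T_c$ satisfying (\ref{dcfrelationwithTc}). Define
\[
P \;=\; \left[\begin{array}{rr} D_{cl} & N_{cl} \\ D_l & -N_l \end{array}\right] \left[\begin{array}{rr} N_r & Y_r \\ D_r & -X_r \end{array}\right].
\]
This is the product of a unimodular matrix (from (\ref{dcfrelationwithTc})) with the inverse of a unimodular matrix (from (\ref{dcf})), hence is itself unimodular in $M(\stabS)$. Expanding the four entries and invoking $D_{cl}N_{r}+N_{cl}D_{r}=I$ together with the two orthogonality relations above, the matrix $P$ collapses to
\[
P \;=\; \left[\begin{array}{rr} I & -Q \\ 0 & I \end{array}\right], \qquad Q \;=\; N_{cl}X_r - D_{cl}Y_r \in M(\stabS).
\]
Right-multiplying the defining equation for $P$ by $\left[\begin{array}{rr} X_l & Y_l \\ D_l & -N_l \end{array}\right]$ and reading off the top row recovers $D_{cl}=X_l-QD_l$ and $N_{cl}=Y_l+QN_l$, establishing the first formula. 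The second formula is obtained symmetrically by applying the analogous unimodular operation on the right factor of the block identity, which shifts $(Y_r,\,-X_r)$ to $(Y_r + N_rQ,\,-(X_r - D_rQ))$ using the same parameter $Q$.

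The proof is essentially linear algebra over $\stabS$ once the preceding theorem is available, and I anticipate no conceptual obstacle. The one place where care is needed is the entrywise collapse of $P$ to upper-triangular form, where the specific orthogonality identities embedded in the DCF (\ref{dcf}) must be used in exactly the right order; the properness requirement on $\det(X_l - QD_l)$ (resp.\ $\det(X_r - D_rQ)$) is the sole non-algebraic restriction on $Q$ and is carried along as a side condition on the parameter set.
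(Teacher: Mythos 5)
Your proposal is correct. The paper itself disposes of this corollary in two sentences: it simply asserts that the two formulas are ``all solutions'' of the identity (\ref{dcfrelationwithTc}), implicitly appealing to the standard parametrization of solutions of a Bezout-type identity from coprime-factorization theory (as it did in the single-port case by citing the literature), and notes that the zero-at-infinity condition guarantees properness of the inverted denominators. What you do differently is supply the derivation the paper outsources: for sufficiency you substitute $D_{cl}=X_l-QD_l$, $N_{cl}=Y_l+QN_l$, $N_{cr}=Y_r+N_rQ$, $D_{cr}=X_r-D_rQ$ into (\ref{dcfrelationwithTc}) and verify it using the relations packed into (\ref{dcf}) (this also yields $D_{cl}N_{cr}=N_{cl}D_{cr}$, so the two formulas define the same $T_c$); for necessity you form the product $P$ of the stabilizing left factor with the right factor of (\ref{dcf}), observe that its entries collapse to the unimodular upper-triangular matrix with corner $-Q$, $Q=N_{cl}X_r-D_{cl}Y_r\in M(\stabS)$, and peel off the top row to recover the left-fraction formula; the right-fraction formula follows symmetrically. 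This is the classical Youla-type argument, so conceptually it is the same route as the paper, but your version is self-contained and makes explicit where each orthogonality relation of the DCF is used, which the paper's citation-style proof does not. Two small points worth tightening: the assertion that the right-fraction formula carries \emph{the same} parameter $Q$ deserves its own one-line computation (e.g., insert the mutually inverse pair from (\ref{dcf}) between the two factors of (\ref{dcfrelationwithTc}) and compare corner blocks, giving $X_lN_{cr}-Y_lD_{cr}=N_{cl}X_r-D_{cl}Y_r$), although the corollary as stated only needs each formula separately to exhaust all stabilizing $T_c$; and in the necessity direction you should note that the recovered $Q$ automatically satisfies the side condition, since $X_l-QD_l=D_{cl}$ is the denominator of a DCF and hence has no zeros at infinity by the paper's convention.
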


\begin{proof}
The stated formulas are all solutions of the identity (\ref{dcfrelationwithTc}) which shows the relationship between $T$ and a stabilizing $T_c$. The conditions on zeroes of denominator fraction matrices is to ensure that these matrices are proper when inverted.
\end{proof}
\section{Multiport network stabilization Example}
We now show an example of stabilization of a practical circuit of two stage operational amplifier in unity feedback configuration. The equivalent circuit for this two stage op amp without a compensating network is shown in the Fig.\ref{equivalentckt}(a). It is required to find a compensating network $T_{c}$ such that the interconnection is stable.


The compensating network can be connected across the two amplifier stages and we can consider a port with a pair of terminals formed due to its connection. Thus the equivalent circuit can be redrawn as shown in the Fig.\ref{equivalentckt}(b). 
\begin{figure}[!ht]
\centering
\subfigure[Small signal equivalent circuit of two stage opamp]{\includegraphics[width=8.4cm,height=3.2cm]{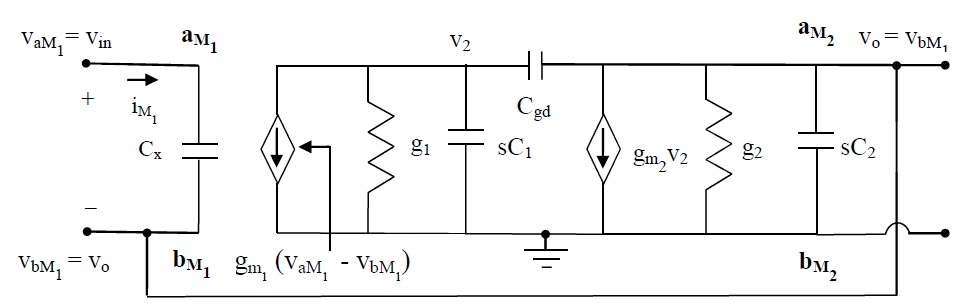}}
\label{eqckt1}
\quad
\subfigure[Small signal equivalent circuit of two stage opamp as a two port network]
{\includegraphics[width=8.2cm,height=4cm]{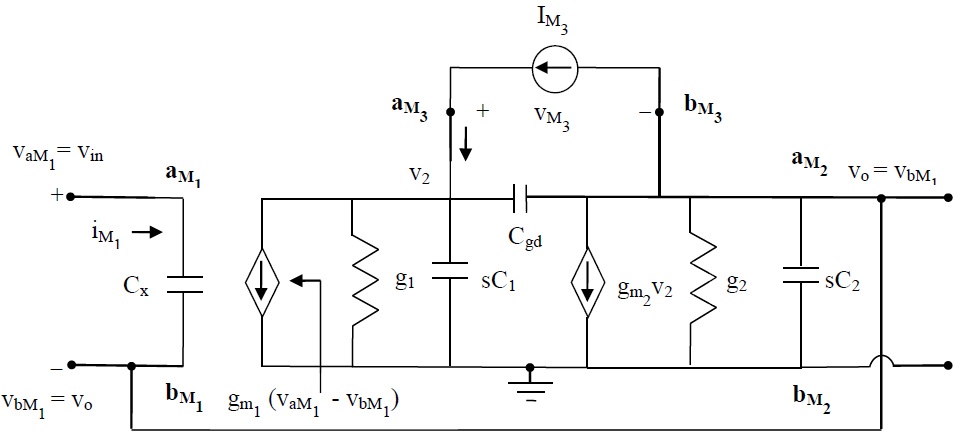}}
\label{fig12:subfigure2}
\caption{Two stage opamp with small signal equivalent circuit}
\label{equivalentckt}
\end{figure}


Let $Y_r $=
$ \begin{bmatrix}
I_{\scriptscriptstyle M_1} (s) \\
V_{\scriptscriptstyle M_3} (s)
\end{bmatrix}$
be vector of Laplace transforms of the responses and
$ U_s=\begin{bmatrix}
V_{\scriptscriptstyle a M_1}(s) \\
I_{\scriptscriptstyle M_3} (s)
\end{bmatrix}$
be vector of Laplace transforms of the independent sources.Thus we have the following matrix equation between the excitation and response signals.
\begin{equation}
\label{doubly1}
\begin{bmatrix}
I_{\scriptscriptstyle M_1} (s) \\
V_{\scriptscriptstyle M_3} (s)
\end{bmatrix}
=
\begin{bmatrix}
T_{11} & T_{12} \\
T_{21} & T_{22}
\end{bmatrix}
\begin{bmatrix}
V_{\scriptscriptstyle a M_1}(s) \\
I_{\scriptscriptstyle M_3} (s)
\end{bmatrix}
\end{equation}

This is equivalent to $Y_r=TU_s$ where elements of matrix $T$ can be computed using the parameter values associated with the equivalent circuit. The circuit can be simplified and solved using Kirchhoff's laws so that the elements of matrix $T$ are given as,
\begin{align}
T_{11}&=sC_{x}\Big[1+ \frac {g_{\scriptscriptstyle m_1}(sC_{gd}-g_{\scriptscriptstyle m_2})}{D_{1}}\Big]  \\
T_{12}&=\Big[\frac{-sC_{x}}{sC_2+g_{\scriptscriptstyle m_2}+g_2}\Big] \Big[-1+\frac{N_{1}(sC_{gd}-g_{\scriptscriptstyle m_2})}{D_{1}}\Big]  \\
T_{21}&= \frac {-g_{m_1}[sC_2+g_{\scriptscriptstyle m_2}+g_{\scriptscriptstyle_2}]} {D_{1}}   \\
T_{22}&= \frac {s(C_1+C_2)+(g_{\scriptscriptstyle m_2}-g_{\scriptscriptstyle m_1}+g_2+g_1)}{D_{1}}  \end{align} 
where
\begin{align*}
D_{1}&=  [(C_1+C_2)C_{gd}+C_1C_2]s^2 +[C_2g_1+C_1g_2+(g_{\scriptscriptstyle m_2}-g_{\scriptscriptstyle m_1}+g_1+g_2)C_{gd}]s 
+[g_{\scriptscriptstyle m_1}g_{\scriptscriptstyle m_2}+g_1g_2] \\
N_{1}&=s(C_1+C_2)+(g_{\scriptscriptstyle m_2}-g_{\scriptscriptstyle m_1}+g_2+g_1)
\end{align*}

But this gives some of the elements of the transfer function matrix $T$ (such as $T_{11}$) as improper with degree of numerator polynomial greater than the degree of denominator polynomial. This poses difficulty in matrix inversion. 

This computational difficulty can be resolved by adopting the regularization procedure which include adding the resistors either in series (or in parallel) of appropriate values at the ports so that none of the elements of the transfer function matrix $T$ are improper. The modified equivalent circuit after regularization is as shown in the Fig.\ref{fig06:name}.
\begin{figure}[H]
\centering
\includegraphics[scale=0.32]{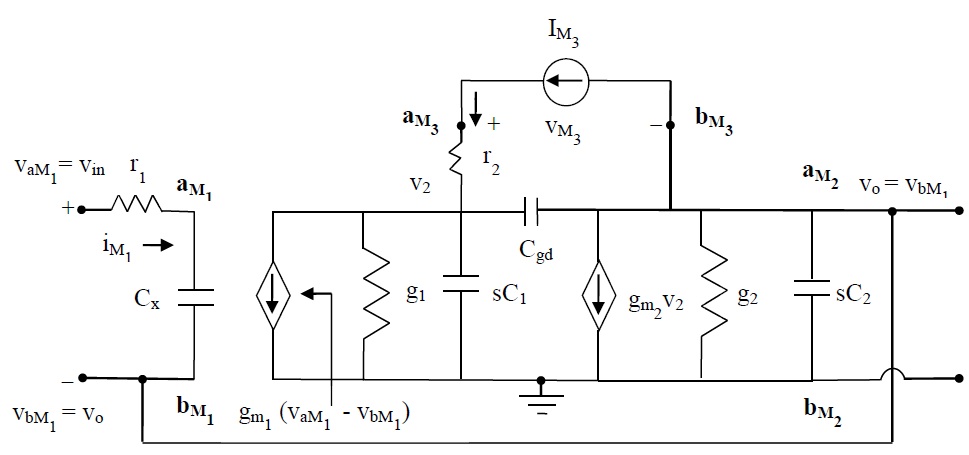}
\caption{Modified Equivalent circuit of two stage opamp }\label{fig06:name}   
\end{figure}

Simplifying and solving the modified equivalent circuit, the elements of matrix $T$ are as given below.
\begin{align}
T_{11}&=\Big(\frac{sC_{x}}{sC_{x}r_{1}+1}\Big)\Big[1+ \frac {g_{ m_1}(sC_{gd}-g_{m_2})}{D_{1}}\Big] \\
T_{12}&=N_{1}.\Big [-1+(g_{m_2}-sC_{gd})r_{2}+\frac{N_{2}(sC_{gd}-g_{ m_2})}{D_{1}}\Big] \\
T_{21}&= \frac {-g_{m_1}[sC_2+g_{m_2}+g_{2}]} {D_{1}}  \\
T_{22}&= \frac {N_{2}}{D_{1}} 
\end{align}
where 
\begin{align*}
D_{1}&=  [(C_1+C_2)C_{gd}+C_1C_2]s^2 +[C_2g_1+C_1g_2+(g_{m_2}-g_{m_1}+g_1+g_2)C_{gd}]s +[g_{m_1}g_{m_2}+g_1g_2] \\
N_{2}&=[1+(s(C_{1}+C_{gd})+g_{1})r_{2}][sC_2+g_{m_2}+g_2 ]- [sC_1+g_1-g_{ m_1}][-1+(g_{m_2}-sC_{gd})r_{2}] \\
N_{1}&=\frac{-sC_{x}}{(sC_{x}r_{1}+1)(sC_2+g_{m_2}+g_2)}
\end{align*}
The values of various parameters are as given below.
$g_{m_1}= 1.8\times10^{-3} A/V $\\ 
$g_{m_2}= 4\times10^{-5} A/V $\\ 
$g_1=\frac{1}{R_1}= \frac{1}{800\times10^3}= 1.25\times10^{-6} A/V $\\ 
$g_2=\frac{1}{R_2}= \frac{1}{300\times10^3}= 3.3333\times10^{-6} A/V $\\ 
$C_1= 0.5\times10^{-12} F $\\ 
$C_2= 68.48\times10^{-12} F $\\ 
$C_{gd}= 0.05\times10^{-12} F $ 

Using $r = r^{'}= 0.1 \ \Omega$, matrix $T$ can be regularized such that the $D$ matrix in its state space model exists which is non-singular and thus the matrix $T$ will have all proper elements.
he matrix $T$ can now be inverted. The elements of matrix $T$ are as given below.
\begin{align*}
T_{11}&=\frac {10s(s+2.327\times10^{6})(s+4.751\times10^{4})}{(s+2\times10^{14})(s^{2}-1.338\times10^{4}s +1.91\times10^{15})}\notag \\ 
T_{12}&=\frac {1.326\times10^{11}s(s+8.25\times10^{7})}{(s+2\times10^{14})(s^{2}-1.338\times10^{4}s +1.91\times10^{15})}\notag \\ 
\end{align*}
\begin{align*}
T_{21}&=\frac {-3.2705\times10^{9}(s+6.328\times10^{5})}{s^{2}-1.338\times10^{4}s +1.91\times10^{15}}\notag \\
T_{22}&=\frac {0.1(s+1.83\times10^{13})(s-2.545\times10^{7})}{s^{2}-1.338\times10^{4}s +1.91\times10^{15}}\notag
\end{align*}

The right coprime factorization of $T$ gives matrices $D_{r}$ and $N_{r}$ respectively. The elements of matrix $D_{r}$ are as given below.


\begin{align*}
D_{r 11}&= \frac {(s+2\times10^{ 14})(s+1.29\times10^{11})(s+3.87\times10^{8})}{(s+1\times 10^{ 10})(s+2\times 10^{10})(s+3\times 10^{12})}  \\
D_{r 12}&= \frac {-3.15\times10^{10}(s+1.17\times10^{14})(s+4.55\times 10^{8})}{(s+1\times 10^{10})(s+2\times 10^{10})(s+3\times 10^{12})} \\
D_{r 21}&=\frac {-1.68\times10^{13}(s+8.89\times10^{9})(s-9.43\times10^{7})}{(s+1\times 10^{10})(s+2\times 10^{10})(s+3\times 10^{12})} \\
D_{r 22}&=\frac {(s+9\times10^{9})(s-9.03\times10^{7})}{(s+1\times10^{10})(s+2\times10^{10})}
\end{align*}
The elements of matrix $N_{r}$ are as given below.
\begin{align*}
N_{r 11}&= \frac {10(s+1.28\times10^{11})(s+3.155\times10^{8})(s-0.3748)}{(s+1\times 10^{10})(s+2\times 10^{10})(s+3\times 10^{ 12})}  \\
N_{r 12}&=\frac {-1.83\times10^{11}(s-0.4025)(s+3.67\times10^{8})}{(s+1\times 10^{10})(s+3\times 10^{12})(s+2\times 10^{10})}  \\ 
N_{r 21}&=\frac {-1.68\times10^{12}(s+3.68\times10^{8})(s-0.4025)}{(s+1\times 10^{10})(s+2\times 10^{10})(s+3\times 10^{12})} \\ 
N_{r 22}&=\frac {0.1(s+1.83\times10^{13})(s+1.12\times10^{10})}{(s+1\times10^{10})(s+2\times10^{10})}
\end{align*}



By solving the Bezout's identity $X_{l}N_{r}+Y_{l}D_{r}=I$, we get $X_{l}$ and $Y_{l}$ respectively. The elements of matrix $X_{l}$ are as given below.
\begin{align*}
X_{l 11}&=\frac {1.41\times10^{13}(s+4.36\times10^{14})(s+2.19\times10^{11})}{(s+1\times 10^{11})(s+2\times 10^{12})(s+3\times 10^{13})} \\
X_{l 12}&= \frac {-5.48\times10^{15}(s+2\times10^{7})}{(s+1\times 10^{11})(s+3\times 10^{13}) } \\
X_{l 21}&= \frac {2.02\times10^{14}(s+1.98\times10^{14})(s+2.09\times10^{11})}{(s+1\times 10^{11})(s+2\times 10^{12})(s+3\times 10^{13})}\\
X_{l 22}&=\frac {-3.28\times10^{16}(s+1.92\times10^{7})}{(s+1\times 10^{11})(s+3\times 10^{13})} 
\end{align*}
The elements of matrix $Y_{l}$ are as given below.
\begin{align*}
Y_{l 11}&= \frac {(s-3.06\times10^{14})(s^{2}+2.76\times10^{11}s+1.22\times10^{23})}{(s+1\times 10^{11})(s+2\times 10^{12})(s+3\times 10^{13})}\\
Y_{l 12}&= \frac {5.48\times10^{14})(s+1.83\times10^{13})}{(s+1\times 10^{11})(s+3\times 10^{13}}\\
Y_{l 21}&=\frac {-2\times10^{15}(s^{2}+2.6\times10^{11}s+1.11\times10^{23})}{(s+1\times 10^{11})(s+2\times 10^{12})(s+3\times 10^{13} } \\
Y_{l 22}&=\frac {(s+3.28\times10^{15})(s+1.82\times10^{13})}{(s+1\times 10^{11})(s+3\times 10^{13})}
\end{align*}


For $Q=0$ the stabilizing compensator $T_{c}$ is given as $X_{l}^{-1}Y_{l}$. Using $X_{l}$ and $Y_{l}$ as computed above we get the elements of $T_{c}$ as given below.
\begin{align*}
T_{c 11}&=\frac {-5.05\times10^{-14}(s^{2}+1.05\times10^{9}s+1.15\times10^{19})}{s+4.104\times10^{7}}\notag \\ 
T_{c 12}&=\frac {8.46\times10^{-15}(s+2\times10^{12})(s+1.92\times10^{10})}{s+4.104\times10^{7}}\notag \\
T_{c 21}&=\frac {-3.12\times10^{-16}(s+3.51\times10^{12})(s-2.97\times10^{12})}{s+4.104\times10^{7}}\notag \\
T_{c 22}&=\frac {2.17\times10^{-17}(s-4.19\times10^{15})(s+2\times 10^{13})}{s+4.104\times10^{7}}\notag
\end{align*}

Now let us find $\hat T$ which is $(T^{-1}+T_{c}^{-1})^{-1}$. The elements of $\hat T$ are as given below. 

\begin{equation*}
\begin{bmatrix}
\hat T_{11} & \hat T_{11} \\
\hat T_{11} & \hat T_{11}
\end{bmatrix} =
\begin{bmatrix}
\frac{\hat N_{11}}{\hat D_{11}} & \frac{\hat N_{11}}{\hat D_{11}} \\ \\
\frac{\hat N_{11}}{\hat D_{11}} & \frac{\hat N_{11}}{\hat D_{11}}
\end{bmatrix}
\end{equation*}
where
\begin{align*}
\hat N_{11}&=10(s-3.06\times10^{14})(s+5.83\times10^{5})(s^{2}+2.65\times 10^{11}s+1.21\times 10^{23})(s+6295) \\
\hat D_{11}&=\hat D_{22}=(s+1\times 10^{10})(s+1\times 10^{ 11})(s+2\times 10^{12})(s+3\times 10^{12})(s+3\times 10^{13}) \\   
\hat N_{12}&=5.49\times 10^{15}(s+1.83\times10^{13})(s+6.59\times10^{ 5})(s+52.37) \\
\hat D_{12}&= (s+1\times 10^{10})(s+1\times 10^{11})(s+3\times 10^{ 12})(s+3\times 10^{13}) \\
\hat N_{21}&=(s+2.2\times10^{7})(s^{2}+5.05\times 10^{11}s+1.3\times 10^{23}) \\
\hat D_{21}&=(s+1\times 10^{10})(s+1\times 10^{11})(s+2\times 10^{10})(s+2\times 10^{12})(s+3\times 10^{12})(s+3\times 10^{13}) \\
\hat N_{22}&= 0.1(s+3.29\times10^{15})(s+1.83\times10^{13})(s+1.83\times 10^{13})(s+3.73\times 10^{10})(s-4.23\times10^{6})  
\end{align*}

It can be seen that the elements of $\hat T$ belong to $M(\stabS)$ and the compensating network $T_{c}$ stabilizes the given network $T$.

\section{Conclusion}
We have developed a theory for compensation of a linear active network at its ports by another linear active network such that the interconnection is stable in the BSBR sense at these ports even when the parameters of the original network are not exact but can be anywhere in a sufficiently small neighbourhood. Our theory can be seen as an extension of the algebraic theory of feedback stabilization which has been well known \cite{vids} in control theory. While in the feedback stabilization theory the given linear system and the controller form a feedback loop, in the case of networks connected at ports, such a loop is not readily available. However the stable coprime fractional approach originally developed for feedback stabilization carries over to solve the problem. Theory of active network synthesis cannot be developed without the stabilization theory and a lack of suitable approach for synthesis of port compensation with stability has been possibly the main hurdle. The resulting stable interconnection is described by an affine parametrization in which the free parameter is itself a stable network function. This parametrization is analogous to the well known parametrization in feedback systems theory and hence has opened doors to approach active network synthesis using analytical methods such as H-infinity optimization. 







\end{document}